\theoremstyle{definition}
\newtheorem{definition}{Definition}[section]
\crefname{algocf}{alg.}{algs.}
\Crefname{algocf}{Algorithm}{Algorithms}
\def\eqref#1{equation~\ref{#1}}
\def\1{\bm{1}}
\def\rva{{\mathbf{a}}}
\def\rvb{{\mathbf{b}}}
\def\rvh{{\mathbf{h}}}
\def\rvx{{\mathbf{x}}}
\def\rvy{{\mathbf{y}}}
\def\rvz{{\mathbf{z}}}
\def\rmA{{\mathbf{A}}}
\DeclareMathAlphabet{\mathsfit}{\encodingdefault}{\sfdefault}{m}{sl}
\SetMathAlphabet{\mathsfit}{bold}{\encodingdefault}{\sfdefault}{bx}{n}
\newcommand{\KL}{D_{\mathrm{KL}}}
\newcommand{\Var}{\mathrm{Var}}
\DeclareMathOperator*{\argmax}{arg\,max}
\newtheorem{theorem}{Theorem}[section]
\newtheorem{lemma}[theorem]{Lemma}
\newtheorem{proposition}[theorem]{Proposition}
\theoremstyle{remark}
\newtheorem*{remark}{Remark}
\newcommand{\poly}{\mathrm{poly}}
\newcommand{\Ent}{\mathbb{H}}
\DeclarePairedDelimiterX{\infdivx}[2]{[}{]}{%
  #1\;\delimsize\|\;#2%
}
\DeclarePairedDelimiterX{\infdivmi}[2]{[}{]}{%
  #1\;\delimsize;\;#2%
}
\newcommand{\MI}{\mathbb{I}\infdivmi}
\newcommand{\KLD}{\KL\infdivx}
\newcommand{\TVD}{D_{TV}\infdivx}
\newcommand{\infD}{D_{\infty}\infdivx}
\newcommand{\fD}{D_{f}\infdivx}
\DeclarePairedDelimiter{\norm}{\lVert}{\rVert}
\DeclarePairedDelimiter{\abs}{\lvert}{\rvert}
\DeclarePairedDelimiterX{\innerProd}[2]{\langle}{\rangle}{%
    #1,#2%
}
\def\Oh{\mathcal{O}}
\renewcommand{\Var}{\mathbb{V}}
\newcommand{\Reals}{\mathbb{R}}
\newcommand{\Nats}{\mathbb{N}}
\newcommand{\defeq}{\stackrel{def}{=}}
\newcommand{\eqdef}{=\vcentcolon} % (added by Lennie 18 Mar, hope you approx Greg!)
\newcommand{\Prob}{\mathbb{P}}
\newcommand{\Ind}{\mathbf{1}}
\newcommand{\Unif}{\mathrm{Unif}}
\newcommand{\Exp}{\mathbb{E}}
\newcommand{\logtwo}{\log_2}
\newcommand{\CommRand}{\mathcal{S}}
\newcommand{\exptwo}{\exp_2}
\newcommand{\Law}[1]{\operatorname{Law}\left({#1}\right)}
\newcommand{\Algorithm}{\mathcal{A}}
\begin{document}
\title{Some Notes on the Sample Complexity of Approximate Channel Simulation} 

%%% Several authors with up to three affiliations:
\author{%
  \IEEEauthorblockN{Gergely Flamich}
  \IEEEauthorblockA{University of Cambridge \\
  Cambridge, UK \\
  gf332@cam.ac.uk}
  \and
  \IEEEauthorblockN{Lennie Wells}
  \IEEEauthorblockA{University of Cambridge \\
  Cambridge, UK \\
  ww347@cam.ac.uk}
}

\maketitle

\begin{abstract}
Channel simulation algorithms can efficiently encode random samples from a prescribed target distribution $Q$ and find applications in machine learning-based lossy data compression. 
However, algorithms that encode exact samples usually have random runtime, limiting their applicability when a consistent encoding time is desirable. 
Thus, this paper considers approximate schemes with a fixed runtime instead. 
First, we strengthen a result of Agustsson and Theis \cite{agustsson2020universally} and show that there is a class of pairs of target distribution $Q$ and coding distribution $P$, for which the runtime of any approximate scheme scales at least super-polynomially in $\infD{Q}{P}$.
We then show, by contrast, that if we have access to an unnormalised Radon-Nikodym derivative $r \propto dQ/dP$ and knowledge of $\KLD{Q}{P}$, we can exploit global-bound, depth-limited A* coding \cite{flamich2022fast} to ensure $\TVD{Q}{P} \leq \epsilon$ and maintain optimal coding performance with a sample complexity of only $\exp_2\!\big((\,\KLD{Q}{P} + o(1)\,) \big/ \epsilon\big)$.
\end{abstract}
\section{Introduction}
\noindent
One-shot channel simulation is a communication problem between two parties, Alice and Bob, who share a probabilistic model over a pair of correlated random variables $\rvx, \rvy \sim P_{\rvx, \rvy}$, as well as a source of common randomness.
In one round of communication, Alice receives a sample $\rvy \sim P_\rvy$ and needs to send the minimum number of bits to Bob, so that he can simulate a sample $\rvx \sim P_{\rvx \mid \rvy}$.
\par
Solutions to this problem provide an alternative to quantization and entropy coding for implementing transform coding.
Thus, efficient channel simulation protocols have far-reaching applications in machine learning-based data compression, as we can use them to turn essentially any generative model into a lossy compression algorithm \cite{havasi2018minimal, flamich2020compressing, agustsson2020universally,guo2023compression,he2024recombiner}.
Furthermore, channel simulation provides unique advantages over quantization-based transform coding in many scenarios, such as when in addition to the rate-distortion trade-off we consider realism constraints \cite{Theis2021a, theis2022lossy} or differential privacy \cite{shah2021optimal}.
\par
Unfortunately, under a reasonable computational model of sampling, exact channel simulation in general is hopelessly difficult.
Concretely, for a given $\rvy \sim P_{\rvy}$, let us set $P \gets P_\rvx$ and $Q \gets P_{\rvx \mid \rvy}$ for brevity.
The standard computational model of channel simulation protocols assumes that Alice and Bob's shared randomness takes the form of an infinite sequence of i.i.d.\ $P$-distributed samples $(X_1, X_2, \hdots)$ and Alice has to select an index $N$, such that $X_N \sim Q$.
Then, under mild assumptions on the selection rule, Goc and Flamich \cite{goc2024channel} show that the sample complexity, i.e.\ the number of samples Alice needs to examine on average from the sequence before she can determine $N$, is at least $\exptwo(\infD{Q}{P})$, where $\infD{Q}{P}$ is the R{\'e}nyi $\infty$-divergence and $\exptwo(x) = 2^x$.
\par
It is thus natural to ask whether relaxing the requirement that the law $P_{X_N}$ of the selected sample be exactly $Q$ could help reduce the computational complexity of channel simulation algorithms.
This question is related to previous investigations by Chatterjee and Diaconis \cite{chatterjee2018sample}, Agustsson and Theis \cite{agustsson2020universally}, and Block and Polyanskiy \cite{block2023sample}, who considered approximate sampling without regard for how efficiently the sample can be encoded.
In this paper, we build on these works and strengthen some of their relevant results.
Interestingly, we find that approximate channel simulation is not harder than approximate sampling.
% L comment: would be nice to say more precisely what we mean by this.
% I think I agree that 'we can attain upper bounds improving upon Block and Polyanskiy while maintaining optimal coding performance', but is this really sufficient to justify the sentence above? I'm not convinced. Perhaps we could move this sentence to the end of the introduction as a punchline
In fact, taking inspiration from the channel simulation literature we can improve sample complexity bounds in general, as we demonstrate in \Cref{sec:stronger_block_polyanskiy}.
\par
\textit{Contributions.}
The goal of our paper is to determine the sample complexity required for approximate channel simulation under different computational assumptions.
In particular,
\begin{enumerate}
% L: changed mild -> a standard, is that OK?
\item We strengthen a result of Agustsson and Theis, and show that approximate sampling is prohibitively expensive for general distributions.
Concretely, we show that under the standard complexity-theoretic assumption that $P \neq RP$, there is no algorithm whose runtime scales polynomially in $\infD{Q}{P}$ and which can output an approximate sample with law $\widetilde{Q}$ such that $\TVD{\widetilde{Q}}{Q} \leq 1/12$.
\item
% We give an improved variant of Block and Polyanskiy's approximate rejection sampler \cite{block2023sample}, that can achieve $\TVD{\widetilde{Q}}{Q} \leq \epsilon$ with a sample complexity of $\ln(\nicefrac{1}{(1-\gamma)\epsilon}) \cdot (f')^{-1}(\nicefrac{\fD{Q}{P}}{\gamma \epsilon})\!\big)$ for any $f$-divergence and any $\gamma \in (0, 1)$.
We give an improved variant of Block and Polyanskiy's approximate rejection sampler \cite{block2023sample}, that can achieve $\TVD{\widetilde{Q}}{Q} \leq \epsilon$ with a sample complexity of 
\begin{align*}
\ln\left(\frac{1}{(1-\gamma)\epsilon}\right) (f')^{-1}\left(\frac{\fD{Q}{P}}{\gamma \epsilon}\!\right)
\end{align*}
for any $f$-divergence and $\gamma \in (0, 1)$.
While this might seem to contradict our first result, we clarify that this requires \textit{exact} knowledge of $\fD{Q}{P}$ and $dQ/dP$.
\item 
We demonstrate that global-bound, depth-limited A* coding \cite{flamich2022fast} can achieve $\TVD{\widetilde{Q}}{Q} \leq \epsilon$ error with a sample complexity of $\exptwo((\KLD{Q}{P} + c)/\epsilon)$ for $c = e^{-1}\logtwo e + 1$. 
%under milder assumptions than needed by Block and Polyanskiy's rejection sampler.
% We demonstrate that depth-limited global-bound A* coding / ordered random coding  \cite{theis2021algorithms,flamich2022fast} achieves $\TVD{\widetilde{Q}}{Q} \leq \epsilon$ error with a sample complexity of exactly $\exptwo((\KLD{Q}{P} + c)/\epsilon)$ for $c = e^{-1}\logtwo e + 1$ under milder assumptions than needed by Block and Polyanskiy's rejection sampler.
\end{enumerate}
\section{Background}
\noindent
\textit{Notation.}
For two real numbers $a, b$ we define the infix notation $a \wedge b = \min\{a, b\}$ and $a \vee b = \max\{a, b\}$.
We denote the base two logarithm as $\logtwo$ and its inverse function as $\exptwo$.
Likewise, we denote the natural logarithm as $\ln$ and its inverse function as $\exp$.
Let $Q$ and $P$ be probability measures over the measurable space $(\Omega, \mathcal{A})$; we will always assume that $\Omega$ is Polish.
Then, we define their total variation distance as $\TVD{Q}{P} = \sup_{B \in \mathcal{A}}\abs{Q(B) - P(B)}$.
Furthermore, assuming $Q \ll P$ and denoting their Radon-Nikodym derivative as $r = dQ/dP$, we define the Kullback-Leibler divergence of $Q$ from $P$ as $\KLD{Q}{P} = \Exp_{X \sim Q}[\logtwo r(X)]$ and the R{\'e}nyi $\infty$-divergence as $\infD{Q}{P} = \logtwo \norm{r}_\infty$, where $\norm{\cdot}_{\infty}$ is the $P$-essential supremum of a $P$-measurable function.
% We define the R{\'e}nyi $\alpha$-divergence for orders $\alpha \in (0, 1) \cup (1, \infty)$ as $\alphaD{\alpha}{Q}{P} = \frac{1}{\alpha - 1}\logtwo\Exp_{X \sim P}[r(X)^\alpha]$, and for $\alpha = 1$ we set $\alphaD{1}{Q}{P} = \KLD{Q}{P}$.
Finally, let $\mathcal{F}\! =\! \{f\!:\! [0, \infty) \to \Reals_{\geq 0} \cup \{\!\infty\!\} \mid f \text{ convex}, f(1)\! =\! 0, f'(1)\! =\! 0\}$.
Then, for an $f \in \mathcal{F}$ we define the $f$-divergence of $Q$ from $P$ as $\fD{Q}{P} = \Exp_{X \sim P}[f(r(X))]$.
\par
\textit{One-shot channel simulation (OSCS)}, also known as relative entropy coding \cite{flamich2020compressing} or reverse channel coding \cite{theis2021algorithms}, is a communication problem between two parties, Alice and Bob, and is defined as follows.
Let $\rvx, \rvy \sim P_{\rvx, \rvy}$ be a pair of random variables, whose law is known to both parties.
Furthermore, we assume that Alice and Bob share a source of common randomness $\CommRand$.
In one round of channel simulation, Alice receives a symbol $\rvy \sim P_\rvy$ and sends the minimum number of bits to Bob such that he can simulate a sample $\rvx \sim P_{\rvx \mid \rvy}$ using $\CommRand$.
Surprisingly, it can be shown that Alice needs to send only $\MI{\rvx}{\rvy} + \logtwo(\MI{\rvx}{\rvy} + 1) + 4.732$ bits on average to achieve this \cite{li2021unified}.
\par
\textit{Selection samplers and sample complexity.}
In practice, we are also concerned with the encoding time of the channel simulation algorithm.
However, we first need to establish a reasonable model of computation within which we can make sense of runtime.
% The most general tool we could use for this is computational complexity theory, where runtime can be associated with the number of steps taken by a universal Turing machine that executes the sampling algorithm.
% However, in most cases this is framework turns out to be unnecessarily complex to analyse.
% Instead, we turn to a natural alternative, which we define below following \cite[Definition A.2]{goc2024channel}.
One could use notions from computational complexity theory, where runtime can be associated with the number of steps taken by a universal Turing machine that executes the sampling algorithm, we consider such a framework in \Cref{sec:stronger_lucas_result}.
However, we are also interested in purely statistical or information theoretical properties of sampling; we focus on a natural formulation of sample complexity for a certain class of sampling algorithms. 
% For concreteness, we define this class of algorithms below, taking inspiration from \cite[Definition A.2]{goc2024channel}.
We define this class of algorithms below, taking inspiration from \cite[Definition A.2]{goc2024channel}.
\begin{definition}[Selection samplers]
\label{def:a_star_like_exact_sampler}
Let $Q \ll P$ be probability measures over some space $\Omega$.
Let $(X_i)_{i\in\Nats}$ be a sequence of i.i.d.\ $P$-distributed random variables.
A selection sampler 
\textit{selects}an index, modeled by some random variable $N$ over $\Nats$, and returns a sample $X_N$.
% returns a sample $X_N$ indexed by some random variable $N$ over $\Nats$.
Moreover,
\begin{itemize}
\item 
if $\Law{X_N} = Q$ we say that the sampler is \textbf{exact}.
\item if $\TVD{\Law{X_N}}{Q} \leq \epsilon$ for some $\epsilon > 0$ we say that the sampler is \textbf{$\epsilon$-approximate}.
\end{itemize}
% If $\Law{X_N} = Q$ we say that the sampler is exact.
% Otherwise, if $\TVD{\Law{X_N}}{Q} \leq \epsilon$ for some $\epsilon>0$ we say that the sampler is $\epsilon$-approximate.
%
Finally, if there is a constant $k \in \Nats$ such that $N \leq k$ then we 
% say that the sampler is bounded, and 
call $k$ the sample complexity of the sampler.
If there is a stopping time $K$ adapted to the sequence $(X_i)_{i\in\Nats}$ such that $1 \leq N \leq K$ then we say that the sampler is A*-like and call $\Exp[K]$ its sample complexity.
\end{definition}
Intuitively, an A*-like sampler needs to examine $K$ proposal samples, after which it \textit{has to} select one of the samples it already examined to output a sample from the target.
We can now precisely state the result of Goc and Flamich \cite{goc2024channel} mentioned in the introduction: the sample complexity of any exact A*-like sampler is at least $\exp_2(\infD{Q}{P})$.
% \sout{{A key goal} of this paper is to investigate the sample complexity of approximate samplers.}
%
% \par
% \textit{Approximate channel simulation.}
% Given the hardness of exact sampling, it is natural to consider whether approximate sampling shares the same hardness.
% Thus, in this paper, we consider the following modified computational framework.
% \begin{definition}[$\epsilon$-Approximate sample complexity]
% Let $Q$ and $P$ be probability measures over some space $\Omega$.
% Let $(X_i)_{i\in\Nats}$ be a sequence of i.i.d.\ $P$-distributed random variables.
% Now, for some fixed $\epsilon > 0$, let $k \in \Nats$ with the following property:
% There exists a random variable $1 \leq N \leq k$ such that we have $\TVD{P_{X_N}}{Q} \leq \epsilon$.
% Then, we call $N$ the selection rule of the sampler and $k$ its sample complexity.
% \end{definition}
% Surprisingly, as we show in \Cref{sec:stronger_lucas_result}, approximate sampling in this framework necessarily has super-polynomial complexity in $\infD{Q}{P}$.
% However, in \Cref{sec:stronger_block_polyanskiy}, we show that with minimal additional assumptions, we can reduce the sample complexity to $\exptwo(\KLD{Q}{P} / \epsilon)$ based on ideas from Block and Polyanskiy \cite{block2023sample}.
%
\par
\textit{A* coding.}
We now briefly describe global-bound A* coding \cite{flamich2022fast}, the namesake of \Cref{def:a_star_like_exact_sampler}, as we will utilise it in \Cref{sec:stronger_block_polyanskiy,sec:approximate_channel_simulation}.
The algorithm is equivalent to the Poisson functional representation \cite{li2018strong} and is based on A* sampling \cite{maddison2014sampling}.
%, however, it is more numerically stable.
Given the shared sequence $(X_i)_{i \in \Nats}$ of i.i.d.\ $P$-distributed samples and the (potentially unnormalized) Radon-Nikodym derivative $\widetilde{r} \propto dQ/dP$, A* coding selects the sample with index $N = \argmax_{k \in \Nats}\{\ln\widetilde{r}(X_k) + G_k \}$, where $G_1$ is a Gumbel random variable with mean $0$ and scale $1$, and for $k > 1$ each $G_k \mid G_{k - 1}$ is a standard Gumbel random variable truncated to $(-\infty, G_{k - 1})$.
While the maximisation to select $N$ is over all positive integers, it can be shown \cite{maddison2016poisson} that it is sufficient to examine the first $K$ elements of the sequence, where $K$ is a geometric random variable with mean $\norm{\widetilde{r}}_\infty$.
\section{Why characterise approximate channel simulation using total variation?}\label{sec:why-total-variation}
\par
A crucial detail to consider in studying approximate channel simulation protocols is how we ought to measure the approximation error.
%L: the -> an
In this section, we present some arguments for why total variation distance is an appropriate choice.
\par
Our main motivation stems from the following ``one-shot'' interpretation of the total variation distance.
Let a sample $X$ follow distribution $Q$ or $\widetilde{Q}$ with probability $1/2$.
Then the probability that an optimal observer can successfully tell which of the two distribution a given sample $X$ follows is
\cite{nielsen2013hypothesis,blau2018perception}:
\begin{align}
p_{\text{success}} = \frac{1}{2}\TVD{Q}{\widetilde{Q}} + \frac{1}{2}.
\nonumber
\end{align}
Identifying $Q$ with the target distribution of a channel simulation algorithm and $\widetilde{Q}$ with its output distribution, we see the above characterisation aligns well with our goal: we are interested in the quality of a single encoded sample, as opposed to the quality of quantities derived from samples.
%, such as Monte Carlo estimates.
\par
Furthermore, providing guarantees on the total variation integrates well with our main application of interest: lossy data compression with realism constraints, or the rate-distortion-perception trade-off \cite{blau2018perception,Theis2021a,theis2024makes}.
For this argument, we briefly describe how lossy data compression is usually implemented using transform coding.
We first encode a stochastic representation $\rvx\! \sim\! Q$ of some data $\rvy\! \sim\! P_\rvy$ using a channel simulation algorithm, and use some measurable transformation $g$ (usually a neural network) to recover the data: $\rvy'\! =\! g(\rvx)$.
Now, in addition to the rate and the distortion of our compressor, we are also interested in controlling its \textit{output distribution}, 
given by the pushforward measure $P_{\rvy'} = g_*Q$.
In the usual, adversarial formulation of realism we require
$\TVD{P_\rvy}{P_{\rvy'}} \leq \delta$, where $\delta = 0$ corresponds to perfect realism \cite{blau2018perception,Theis2021a}.
% Now, assume that we use an approximate scheme to encode a sample from $Q$, resulting in a sample with distribution $\widetilde{Q}$ with $\TVD{\widetilde{Q}}{Q} \leq \epsilon$.

Now, assume that we wish to use an $\epsilon$-approximate scheme to encode a sample from $Q$ instead of an exact one, resulting in a sample with distribution $\widetilde{Q}$ with $\TVD{\widetilde{Q}}{Q} \leq \epsilon$.
Then the output distribution is given by $g_*\widetilde{Q}$, so the realism of our transform coder depends on both the transform $g$ and the channel simulation protocol we use.
By applying the triangle inequality, we can bound the realism error as
\begin{align}
\TVD{P_\rvy}{g_* \widetilde{Q}} 
&\leq \TVD{P_\rvy}{g_* Q} + \TVD{g_* \widetilde{Q}}{g_* Q} \nonumber\\
&\leq \TVD{P_\rvy}{g_*Q} + \TVD{\widetilde{Q}}{Q} \nonumber\\
&\leq \delta + \epsilon,
\nonumber
\end{align}
where the second inequality uses the data processing inequality and the third applies the assumed bounds on the TV distances.
The practical significance of this decomposition is that in machine-learning-based pipelines, where $g$ is usually a neural network and we wish to learn its parameters, it provides principled justification for optimizing $\TVD{P_\rvy}{g_*Q}$, which can be done relatively easily using adversarial methods \cite{mentzer2020high}, instead of $\TVD{P_\rvy}{g_* \widetilde{Q}}$.
% %
\par
Finally, on a more technical note, we consider an error bound proposed by Chatterjee and Diaconis \cite{chatterjee2018sample} for self-normalized importance sampling that has been adapted to minimal random coding \cite{havasi2018minimal} and A* coding \cite{flamich2022fast}.
To state the bound, write $\widetilde{Q}_n$ for the approximate distribution of the procedure; consider a sample complexity of $n = \exptwo(\KLD{Q}{P} + t)$ for some $t \geq 0$; then for any measurable function $f$, we have
\begin{gather}
\Prob\left[\abs*{\Exp_{\widetilde{Z} \sim \widetilde{Q}_n}[f(\widetilde{Z})] - \Exp_{Z \sim Q}[f(Z)]} \geq \frac{2 \norm{f} \epsilon}{1 - \epsilon}\right] \leq 2 \epsilon, \label{eq:chatterjee_prob_bound} \\
% \end{align}
% where
% \begin{align}
\epsilon = \left(2^\frac{-t}{2} + 2 \sqrt{\Prob_{X \sim Q}\left[\logtwo \frac{dQ}{dP}(X) \! \geq \!\KLD{Q}{P}\! + \!\frac{t}{2}\right]}\right)^{\frac{1}{2}}
\nonumber
\end{gather}
and $\norm{f}$ denotes the $L^2(Q)$ norm of $f$.
It might appear that the error $\epsilon$ vanishes exponentially quickly in the quantity $t$, which was introduced as an additive overhead to $\KLD{Q}{P}$, at odds with the results from \cite{agustsson2020universally,block2023sample}.
To resolve this conflict, first note that this bound applies to a test function $f$ rather than providing a one-shot guarantee on the approximate sample.
Secondly, as noted in \cite{chatterjee2018sample}, the bound is only meaningful when $\logtwo \frac{dQ}{dP}(X)$ is concentrated; we present a simple calculation in Appendix~\ref{app:diaconis-looseness} that shows that for certain pairs of distributions, ensuring a given tolerance in fact requires $t$ to scale with $\KLD{Q}{P}$, corresponding to a much larger sample size.
In Appendix~\ref{app:diaconis-looseness} we also present a (strict) strengthening of this bound inspired by ideas that we will develop in \Cref{sec:stronger_block_polyanskiy}.
\section{Improving a result of Agustsson and Theis}
\label{sec:stronger_lucas_result}
\noindent
In this section, we strengthen the result of \cite{agustsson2020universally} on the computational hardness of approximate sampling.
However, before we begin, we make some definitions that we will use later.
\begin{definition}[Restricted Boltzmann Machine (RBM)]
For some integer $M$, let $\Omega = \{0, 1\}^M$ and let $\rva, \rvb \in \Reals^M$ and $\rmA \in \Reals^{M\times M}$.
Then, the RBM distribution $Q$ with parameters $\theta = \{\rva, \rvb, \rmA\}$ is given by the probability mass function
\begin{align}\label{eq:RBM-density}
q(\rvz) \propto \sum_{\rvh \in \Omega} \exp\left(\rva^\top \rvz + \rvh^\top \rmA \rvz + \rvb^\top \rvh \right).
\end{align}
\end{definition}
\begin{definition}[Efficiently evaluatable representation]
\label{def:efficiently_evaluatable_rep}
For positive integers $N, M$ a function $f: \{0, 1\}^N \to \{0, 1\}^M$ is an efficiently evaluatable representation of a distribution $Q$ if:
\begin{enumerate}
\item $f$ consits of a Boolean circuit of $\poly(M)$ size and ${N = \poly(M)}$ input bits and $M$ output bits.
\item For $B \sim \Unif\left(\{0, 1\}^N\right)$, we have $f(B) \sim Q$.
\end{enumerate}
\end{definition}
%L comment: the version of this above does not look rigorous to the point of being vacuous: the result only makes sense wrt a family of distributions indexed by some 'size' parameter $M$, rather than a 'one shot situation'
% Long and Servedio have something more meaningful as remark 1 at the bottom of their second page
% 
\begin{theorem}
\label{thm:improved_sampling_hardness}
Consider an algorithm which receives the parameters of an arbitrary RBM $Q$ of problem size $M$ as input and has access to an unlimited number of i.i.d.\ random variables $Z_n\! \sim\! P$, where $P$ is the uniform measure over $\{0,1\}^M$. 
It outputs $\widetilde{Z}\! \sim\! \widetilde{Q}$ with $\TVD*{\widetilde{Q}}{Q} \!\leq\! 1/12$.
If $RP \neq NP$, there is no such algorithm with $\poly(\infD{Q}{P})$ time complexity.
\end{theorem}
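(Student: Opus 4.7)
The plan is to reduce $3$-SAT to approximate RBM sampling, so that an algorithm $\mathcal{A}$ meeting the stated complexity would place $NP$ inside $RP$. Assume for contradiction that such an $\mathcal{A}$ exists.

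First, given a $3$-SAT instance $\phi$ with $M$ variables and $k = \poly(M)$ clauses, I would construct an RBM distribution $Q_\phi$ over $\{0,1\}^M$ that concentrates on the set $S_\phi$ of satisfying assignments of $\phi$. Using the standard softplus encoding of clauses---one hidden unit per clause, with weights scaled by a temperature parameter $\lambda = \Theta(M)$---one arranges the RBM parameters in \eqref{eq:RBM-density} so that $q(\rvz)$ is exponentially larger (in $\lambda$) on every $\rvz \in S_\phi$ than on every $\rvz \notin S_\phi$, yielding $Q_\phi(S_\phi) \geq 1 - 2^{-\Omega(M)}$ whenever $S_\phi \neq \emptyset$. (If necessary one pads the hidden/visible dimensions so the resulting weights fit the square form of \Cref{def:efficiently_evaluatable_rep} without altering the law.)

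Next, I would bound the divergence: because $P$ is uniform on $\{0,1\}^M$ and $|S_\phi| \geq 1$ whenever $\phi$ is satisfiable, the Radon-Nikodym derivative satisfies $\|dQ_\phi/dP\|_\infty \leq 2^M(1+o(1))$, so $\infD{Q_\phi}{P} \leq M + O(1)$. Consequently $\poly(\infD{Q_\phi}{P}) = \poly(M)$, so $\mathcal{A}$ runs in time polynomial in the input size.

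Finally, I would invoke $\mathcal{A}$ on the description of $Q_\phi$ to obtain $\widetilde{Z} \sim \widetilde{Q}$ and then check in polynomial time whether $\widetilde{Z} \in S_\phi$, accepting if so and rejecting otherwise. For unsatisfiable $\phi$ the check never passes and we correctly reject. For satisfiable $\phi$, the triangle inequality gives
\begin{align*}
\Prob[\widetilde{Z} \in S_\phi] \geq Q_\phi(S_\phi) - \TVD{\widetilde{Q}}{Q_\phi} \geq 1 - \tfrac{1}{12} - o(1) > \tfrac{1}{2},
\end{align*}
placing SAT in $RP$ and contradicting $RP \neq NP$.

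The main obstacle is engineering the ``soft SAT'' Gibbs distribution into the exact RBM form of \eqref{eq:RBM-density} with polynomially bounded parameters and verifying that the clause-satisfaction approximation is uniformly tight enough across all assignments, so that $Q_\phi(\{0,1\}^M \setminus S_\phi)$ is genuinely $2^{-\Omega(M)}$ rather than merely $o(1)$. Everything else---the divergence bound, the $RP$-style decision rule, and the polynomial-time satisfaction check---is routine.
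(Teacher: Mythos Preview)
Your plan is viable but takes a different route from the paper. The paper invokes Theorem~13 of Long and Servedio as a black box: assuming $RP \neq NP$, no polynomial-time algorithm can, given arbitrary RBM parameters, output an efficiently evaluatable representation of a distribution within total variation $1/12$ of that RBM. The paper then observes---exactly as you do---that $\infD{Q}{P} = \logtwo \max_{\rvz}\{q(\rvz)/2^{-M}\} \leq M$ for \emph{any} distribution on $\{0,1\}^M$, so a $\poly(\infD{Q}{P})$-time sampler runs in $\poly(M)$ time and consumes at most $\poly(M)$ uniform random bits; but that is precisely an efficiently evaluatable representation of $\widetilde{Q}$ in the sense of \Cref{def:efficiently_evaluatable_rep}, contradicting Long--Servedio directly. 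Your route instead re-derives the SAT-to-RBM hardness inline and then runs the RP decision procedure yourself. Both arguments hinge on the same $\infD{Q}{P} \leq M$ bound; the paper's is shorter and modular, while yours is self-contained but saddles you with the RBM--SAT gadget you correctly flag as the main obstacle. One caution on your sketched encoding: with one softplus hidden unit per clause and weights scaled by $\lambda=\Theta(M)$, $\log q(\rvz)$ ends up scaling with the total number of satisfied \emph{literals} rather than with the indicator of full satisfaction, so a near-satisfying assignment that hits many literals can carry far more mass than a minimally-satisfying one, and $Q_\phi(S_\phi)\to 1$ need not follow. A more careful gadget---or simply citing Long--Servedio, as the paper does---is needed to close that gap.
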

\begin{proof}
The proof follows the proof of Agustsson and Theis \cite{agustsson2020universally} mutatis mutandis, which we repeat here for completeness.
\par
The high-level idea is that for such pairs of distributions $Q \ll P$, even evaluating their Radon-Nikodym derivative $r = dQ/dP$ is difficult.
To this end, we make use of the result of Theorem 13 of Long and Servedio \cite{long2010restricted}:
\begin{theorem}
\label{thm:rbm_simulation_hardness}
If $RP \neq NP$, then there is no polynomial-time algorithm with the following property: Given parameters $\theta = (\rmA, \rva, \rvb)$ as input, the algorithm outputs an efficiently evaluatable
representation of a distribution whose total variation distance from an RBM with parameters $\theta$ is at most $1 / 12$.
\end{theorem}
Now, fix some positive integer $M$ and fix some $\theta$ as given in \Cref{thm:rbm_simulation_hardness}.
Assume there is an algorithm $\Algorithm$ which outputs $\widetilde{Z} \sim \widetilde{Q}$ in $\psi(\infD{Q}{P})$ steps with $\TVD*{\widetilde{Q}}{Q} \leq 1/12$ for some polynomial $\psi$.
We compute
\begin{align}
\infD{Q}{P}
= \log_2 \max_{\rvz \in \Omega}\!\left\{\!\frac{q(\rvz)}{2^{-M}}\!\right\} 
\leq \log_2 \!\left\{\!\frac{1}{2^{-M}}\!\right\} = M \, .
\label{eq:rbm_inf_divergence_bound}
\end{align}
Then by \Cref{eq:rbm_inf_divergence_bound} the computational complexity of $\Algorithm$ is at most $N = \psi(M)$.
In that time, $\Algorithm$ can examine at most $N$ random variables $Z_n$; since the input random variables are i.i.d.,
we can assume without loss of generality that $\Algorithm$ examines the first $N$ of them. 
Since the proposal $P$ is the uniform measure on $\{0, 1\}^M$, these $N$ variates correspond to an input of $M \cdot \psi(M) = \poly(M)$ uniformly random bits.
This is an efficiently evaluatable representation of $\widetilde{Q}$, which contradicts \Cref{thm:rbm_simulation_hardness}, assuming $RP \neq NP$.
\end{proof}

% To do, add in some remarks about what exactly this is saying

%
\section{Improving the scheme of Block and Polyanskiy}
\label{sec:stronger_block_polyanskiy}
\begin{algorithm}[t]
\SetAlgoLined
\DontPrintSemicolon
\SetKwInOut{Input}{Input}
\SetKwInOut{Output}{Output}
\SetKwFunction{clip}{clip}
\Input{Sequence $(X_i)_{i \in \Nats}$ of i.i.d.\ $P$-distributed samples,
target $Q$ defined via $\widetilde{r} \propto dQ/dP$, 
computational budget $k$}
$N, Y, G_0, L \gets (0, \perp, \infty, -\infty)$\;
\For{$i = 1$ \KwTo $k$}{
$G_{i} \sim \text{TruncGumbel}(0, 1)\vert_{(-\infty, G_{i - 1})}$ \;
\If{$L < \ln\widetilde{r}(X_i) + G_i$}{
 $L \gets \ln\widetilde{r}(X_i) + G_i$ \;
 $N, Y \gets i, X_i$
 }
}
\KwRet{$N, Y$}
\caption{Depth-limited A* coding.}
\label{alg:modified_a_star_coding}
\end{algorithm}
\noindent
Recently, Block and Polyanskiy \cite{block2023sample} have shown that by modifying rejection sampling \cite{neumann1951various}, we can achieve $\epsilon$ error in total variation at a sample complexity of 
\begin{align}
k = \frac{2}{1 - \epsilon}\ln\left(\frac{2}{\epsilon}\right)(f')^{-1}\left(\frac{4 \cdot \fD{Q}{P}}{\epsilon}\right) \vee 2.
\label{eq:block_polyansky_sample_complexity}
\end{align}
In particular, as a special case of the above equation we get that $k = \Oh(\exptwo(4\cdot \KLD{Q}{P} / \epsilon))$.
Note that, in general, we can have $\KLD{Q}{P} \ll \infD{Q}{P}$ so this might seem at odds with \Cref{thm:improved_sampling_hardness}.
To reconcile these two results, we highlight that additional assumptions were needed in Block and Polyanskiy's scheme to achieve this improved sample complexity: we need to be able to compute $\fD{Q}{P}$ as well as evaluate $dQ/dP$ \textit{exactly}.
By contrast these are not given in the setup of \cite{agustsson2020universally}, and it is shown in \cite[Theorem 8]{long2010restricted} that it is computationally hard to even approximate the normalizing constants corresponding to densities in \Cref{eq:RBM-density} to within an exponentially large factor!
% The latter assumption is critical; Long and Servedio show \cite[Theorem 8]{long2010restricted} that, assuming $P \neq NP$, there exist RBM distributions  whose normalizing constants are computationally hard to even approximate within an exponentially large factor!
% Neither of these are given in the setup of ..., and indeed and are computationally intractable, indeed ...
%
\par
\textit{The approximate rejection sampler:}
We now describe the three key ideas of Block and Polyanskiy to modify rejection sampling \cite{neumann1951various} to get an approximate scheme for some target $Q$ and proposal $P$:
\begin{enumerate}
\item They fix a budget $k \in \Nats$ and if the rejection sampler does not terminate, they pick one of the proposed samples at random.
Denoting the output distribution of the budgeted sampler as $\widetilde{Q}$ and its termination step as $K$, the sampler's error is
\begin{align}
\TVD{\widetilde{Q}}{Q} =  \Prob[K\! >\! k] \TVD{Q}{P} \leq \Prob[K\! >\! k]
\label{eq:approximate_vs_exact_tv_distance_bound}
\end{align}
\item Since $\Prob[K > k]$ in \Cref{eq:approximate_vs_exact_tv_distance_bound} depends on $\infD{Q}{P}$, they propose to use a truncated target $Q_M$, defined via its Radon-Nikodym derivative
\begin{align}
\frac{dQ_M}{dP}(x) \propto \Ind\left[\frac{dQ}{dP}(x) \leq M \right] \cdot \frac{dQ}{dP}(x).
\label{eq:block_polyanskiy_approx_target}
\end{align}
Then, using $Q_M$ in the rejection sampler with a budget of $k$ samples will have distribution $\widetilde{Q}_M$.
\item
They show that for a fixed $\epsilon > 0$, setting
\begin{align}
M = (f')^{-1}(4 \cdot \fD{Q}{P} / \epsilon)   
\nonumber 
\end{align}
and $k$ as in \Cref{eq:block_polyansky_sample_complexity} yields $\TVD{\widetilde{Q}_M}{Q_M} \leq \epsilon / 2$ and $\TVD{Q_M}{Q} \leq \epsilon / 2$.
Combining these two inequalities and applying the triangle inequality then yields the desired guarantee $\TVD*{\widetilde{Q}_M}{Q} \leq \epsilon$ at the sample complexity given in \Cref{eq:block_polyansky_sample_complexity}.
\end{enumerate}
\par
We now propose several small improvements to this scheme.
In this section, we focus on the sample complexity of the improved scheme and deal with encoding the approximate samples in \Cref{sec:approximate_channel_simulation}.
\par
\textit{Useful quantities and identities:} inspired by \cite{goc2024channel}, take ${w_Q(h) = \Prob_{X \sim Q}[r(X) \geq h]}$, ${w_P(h) = \Prob_{X \sim P}[r(X) \geq h]}$, and $W_P(h) = \int_0^h w_P(\eta) \, d\eta$, $S_P(h) = 1 - W_P(h)$.
Note, that by Fubini,
\vspace{-0.3cm}
\begin{align}
%% an alternative version without overbrace that fits on one line
%\int_0^\infty\! w_P(h) dh = \!\!\int_0^\infty \!\!\Prob_{X \sim P}[r(X) \!\geq\! h]dh = \Exp_{X \sim P}[r(X)] = 1.
\int_0^\infty \overbrace{\Prob_{X \sim P}[r(X) \geq h]}^{= w_P(h)} \, dh = \Exp_{X \sim P}[r(X)] = 1.
\nonumber
\end{align}
This, taken together with the fact that $w_P \geq 0$ shows that we can interpret it as the probability density of a random variable $H$.
Thus, we can also interpret $W_P$ as $W_P(h) =  \Prob[H \leq h]$, and similarly $S_P(h) = \Prob[H>h]$ as $H$'s survival function.
Now, for any $f \in \mathcal{F}$ and $a \geq 1$, we can bound $S_P$ by noting
\begin{align*}
f'(a) \Prob[H > a] 
&= f'(a) \int_{a}^\infty w_P(h) \, dh \\
&\leq \int_{a}^\infty f'(h) w_P(h) \, dh \\
&\leq \int_{1}^\infty f'(h) w_P(h) \, dh \\
&= \int_\Omega \int_1^\infty f'(h) \Ind[r(x) \geq h] \,dh \,dP(x) \\
&= \int_\Omega \Ind[r(x) \geq 1] f(r(x)) \, dP(x) \\
&\leq \fD{Q}{P},
\end{align*}
where the second equality follows from Fubini (positive integrand), the third equality from the fundamental theorem of calculus, and we exploit $f\geq 0 = f(1)$.
Rearranging, we get 
\begin{align}
S_P(a) \leq \fD{Q}{P} / f'(a).
\label{eq:survival_prob_upper_bound}
\end{align}
Next, note that
\begin{align}
W_P(h)
&= \int_{\Omega} \int_{0}^h \Ind[r(x) \geq \eta] \, d\eta \, dP(x) \nonumber\\
&= \int_{\Omega} r(x) \wedge h \, dP(x) \label{eq:handy-wedge-result} \\
&= \int_{x: r(x) \geq h} h\, dP(x) + \int_{x: r(x) \leq h} r(x)\, dP(x) \nonumber\\
&= h \cdot w_P(h) + (1-w_Q(h)). \nonumber
\end{align}
Rearranging the terms, we find
\begin{align}
S_P(h) = w_Q(h) - h \cdot w_P(h).
\label{eq:survival_prob_identity}
\end{align}
\par
\textit{Better approximate target distribution for a tighter bound:}
The truncated target in \Cref{eq:block_polyanskiy_approx_target} is a rough approximation of $Q$.
Instead, we propose a better approximation that will also lend itself to simpler analysis and will allow us to improve \Cref{eq:block_polyansky_sample_complexity}.
We define this through its density
\begin{align}
\label{eq:improved-truncated-approx-target-density}
\frac{dQ_M}{dP}(x) = r_M(x) = \frac{r(x) \wedge M}{W_P(M)}.
\end{align}
That $W_P(M)$ normalises $r(x) \wedge M$ follows from \Cref{eq:handy-wedge-result}.
Now, set $\widetilde{M} =  M / W_P(M)$.
In Appendix~\ref{app:approximate-target-properties-and-optimality} we show that
\begin{align}
\TVD{Q_M}{Q} = S_P(\widetilde{M}) \stackrel{\text{\cref{eq:survival_prob_upper_bound}}}{\leq} \frac{\fD{Q}{P}}{f'(\widetilde{M})}
\label{eq:approximate_target_tv_bound}
\end{align}    
We now substitute this improved truncation into Block and Polyanskiy's scheme to improve upon \Cref{eq:block_polyansky_sample_complexity}.
Let $\epsilon$ be given, and let $\gamma \in (0,1)$ a constant that we have yet to choose.
By a standard result for rejection sampling, $K$ is geometrically distributed with mean $\norm{r_M}_\infty = M / W_P(M) = \widetilde{M}$.
Thus,
\begin{align}
\Prob[K > k] = \left(1 - \frac{ 1}{\widetilde{M}}\right)^k \leq \exp\left(-\frac{k}{\widetilde{M}}\right).
\end{align}
Setting $k \geq \widetilde{M} \ln(1 / (1-\gamma) \epsilon)$ thus gives $\TVD{\widetilde{Q}_M}{Q_M} \leq (1- \gamma) \epsilon$.
Furthermore, by \Cref{eq:approximate_target_tv_bound}, setting 
\begin{align}
\widetilde{M} = (f')^{-1}\left(\frac{\fD{Q}{P}}{\gamma \epsilon}\right)
\nonumber
\end{align}
will guarantee $\TVD{Q_M}{Q} \leq \gamma \epsilon$.
Combining these facts with the triangle inequality shows that for a fixed $\epsilon > 0$, a sample complexity of
\begin{align}
k \geq \ln\left(\frac{1}{(1-\gamma)\epsilon}\right)(f')^{-1}\left(\frac{\fD{Q}{P}}{\gamma \epsilon}\right)
\label{eq:better_general_sample_complexity}
\end{align}
is sufficient to achieve $\epsilon$ error in the TV distance.
Since the dependence on $(1-\gamma)$ is only logarithmic, the bound will often be tightest by taking $\gamma$ very close to 1.
This improves on \Cref{eq:block_polyansky_sample_complexity} by reducing the coefficient of $\fD{Q}{P}$ from 4 to a constant $1/\gamma$ that can be taken arbitrarily close to 1, removing the $2 / (1 - \epsilon)$ coefficient in the bound and additionally removing the requirement that $k$ be at least $2$.
In practice, we can compute $M$ from $\widetilde{M}$ by numerically inverting the function $h \mapsto h / W_P(h)$.
This function is strictly increasing on a suitable domain by a simple calculation in Appendix~\ref{app:approximate-target-properties-and-optimality}. 
In the rest of Appendix~\ref{app:approximate-target-properties-and-optimality}, we also derive cleaner bounds for exact rejection sampling from $Q_M$ (allowing random runtime), and show that our truncation is optimal in a certain sense. 

\par
% \begin{remark}
%     Replacing rejection sampling by global-bound A* sampling \cite{maddison2016poisson} actually yields identical analysis. This sampling scheme also requires This is because the total number $K$ of samples examined 
% \end{remark}
\textit{Replacing rejection sampling by global-bound A* sampling yields identical analysis.}
Indeed, if we use the truncated target $Q_M$, the index $K$ at which we can guarantee termination is again geometrically distributed with mean $\norm{r_M}_\infty$ \cite{maddison2016poisson}.
Remarkably, we in fact only need access to an unnormalised version of $\tilde{r}_M \propto r_M$ to run the algorithm, though we would also need knowledge of $\fD{Q}{P}$ to compute the sample complexity in \Cref{eq:better_general_sample_complexity}.

\textit{Depth-limited A* sampling yields better bounds without needing truncation.}
% Note that we can apply this either to the original target $Q$, or the truncated target $Q_M$ defined in the previous subsection.
% This choice has two immediate advantages: 1) it is applicable to problems where $r$ or $r_M$ is known only up to a normalising constant and we only need to know $\fD{Q}{P}$ to run the algorithm, and
% 2) as we show in \Cref{sec:approximate_channel_simulation}, this modification yields an efficient approximate channel simulation algorithm.
% First we note that we can immediately replace rejection sampling by A* coding, use the same truncation...
% The real improvement comes from depth limited version and basing the budget on $N$. 
% Want structure of this subsection to reflect that insight.
Using depth-limited A* sampling, as described in \Cref{alg:modified_a_star_coding}, again only requires access to an unnormalised density ratio, but now is well behaved even when this density ratio is unbounded. Indeed, there is now no need for truncation; we next obtain tighter bounds by working with the original target distribution $Q$ and basing the bounds on the index $N$ of the accepted sample instead of the sampler's runtime $K$.
\par
As before, we use \Cref{eq:approximate_vs_exact_tv_distance_bound} to obtain the bound ${\TVD{\widetilde{Q}}{Q} \leq \Prob[N > k]}$ for a fixed sample complexity $k$.
Now, note that by Markov's inequality, we have
\begin{align}
\Prob[N > k] &= \Prob[\logtwo N > \logtwo k] \nonumber\\
&\leq \frac{\Exp[\logtwo N]}{ \logtwo k} \nonumber\\
&\leq \frac{\KLD{Q}{P} + e^{-1} \logtwo e + 1}{\logtwo k}, 
\label{eq:kl_bound_on_a_star_coding}
\end{align}
where the last equality follows from the identity given in Appendix A of \cite{li2018strong}.
Note that the bound in \Cref{eq:kl_bound_on_a_star_coding} does not depend on $\norm{r}_\infty$.
Therefore, for a fixed $\epsilon > 0$, choosing 
\begin{align}
k = \exptwo\left(\frac{\KLD{Q}{P} + e^{-1} \logtwo e + 1}{\epsilon}\right)
\label{eq:index_based_sample_complexity}
\end{align}
yields the desired bound $\TVD{\widetilde{Q}}{Q} \leq \epsilon$.
\Cref{eq:index_based_sample_complexity} significantly improves \Cref{eq:block_polyansky_sample_complexity}, and is close to being worst-case optimal \cite[Theorem 5]{block2023sample}.
It also significantly improves a result of Theis and Yosri \cite[Corollary 3.2]{theis2021algorithms}, which requires that $k = \Oh(\exptwo(\infD{Q}{P}))$ to guarantee the same $\epsilon$ total variation error.
\par
\textit{Remark.}
We can obtain almost-optimal $\epsilon$-approximate sample complexities similar to \Cref{eq:index_based_sample_complexity} by using ordered random coding \cite{theis2021algorithms}, greedy Poisson rejection sampling \cite{flamich2023gprs} or greedy rejection coding \cite{flamich2023adaptive,flamich2023grc} and modifying the arguments that yield \Cref{eq:kl_bound_on_a_star_coding} appropriately.
Furthermore, via similar arguments we can also obtain linear-in-the-KL sample complexities for the branch-and-bound variants of these algorithms when $\Omega$ is one-dimensional and $\tilde{r}$ is unimodal.
\section{Approximate Channel Simulation Using the Improved Scheme}
\label{sec:approximate_channel_simulation}
\noindent
Unfortunately, an issue with Block and Polyanskiy's sampler is that it cannot be used for channel simulation.
As we show in Appendix~\ref{app:lower_bound_on_rej_sampler_entropy},  $\Ent[N] \geq 4 \cdot \MI{\rvx}{\rvy} / \epsilon$, which is much worse than the optimal upper bound of $\MI{\rvx}{\rvy} + \logtwo(\MI{\rvx}{\rvy} + 1) + \Oh(1)$.
\par
Thankfully, this can also be fixed by replacing rejection sampling with A* coding.
For some fixed target $Q$ and proposal $P$, let $N$ denote the index returned by A* coding, and let $N'$ denote the index returned by \Cref{alg:modified_a_star_coding}, i.e.\ depth-limited A* coding with given a budget of $k$ samples.
Then, 
\begin{align}
N' &= \argmax_{n \in \{1,\hdots, k\}}\{\ln\widetilde{r}(X_k) + G_k\} \nonumber \\
&\leq \argmax_{k \in \Nats}\{\ln\widetilde{r}(X_k) + G_k \} = N \nonumber 
\end{align}
Hence, we have $\Exp[\logtwo N'] \leq \Exp[\logtwo N]$.
We encode $N'$ using the $\zeta$-distribution $\zeta(n \mid \lambda) \propto n^{-\lambda}$. 
Li and El Gamal \cite{li2018strong} show that by setting
${\lambda = 1 + 1 / (\MI{\rvx}{\rvy} + e^{-1}\logtwo e + 1)}$ we get
\begin{align}
\Ent[N'] < \MI{\rvx}{\rvy} + \logtwo(\MI{\rvx}{\rvy} + 1) + 4,
\end{align}
meaning we can encode our sample at the optimal rate.
% \begin{align}
% &\Ent[N'] = \phi(\Prob[N \leq k]\Prob[N = 1]  \nonumber \\
% &+ \Prob[N > k]) + \sum_{n = 2}^k \phi(\Prob[N\! \leq\! k]\Prob[N = 1]) \nonumber\\
% &\leq \phi(\Prob[N\! \leq\! k]) + \phi(\Prob[N\! > \! k]) + \Prob[N\! \leq\! k] \sum_{n = 1}^k \phi(\Prob[N \!=\! n])
% \nonumber \nonumber\\
% &\leq h_b(\Prob[N \leq k]) + \Prob[N \leq k] \Ent[N] \\
% &\leq \Ent[N] + 1, \nonumber
% \end{align}
% where $h_b(p)$ denotes the entropy of Bernoulli random variable.
% Then,
%
\section{Discussion and Future Work}
\par
To summarise our results: $\epsilon$-approximate sampling has computational complexity super-polynomial in $\infD{Q}{P}$ without further assumptions.
When $\KLD{Q}{P}$ is known, one can get $\epsilon$-approximate samples with sample complexity exponential in $\KLD{Q}{P}/\epsilon$ and we can also encode these samples at the optimal rate using \Cref{alg:modified_a_star_coding}.
It would be interesting to consider what other structural assumptions may be leveraged to achieve lower sample complexity.
% In addition, it is interesting to consider whether schemes that are not based on the A*-like non-causal sampling framework may achieve lower sample complexity in the general case, or indeed under more favourable structural assumptions.

% L: Question: the A* coding for rejection sampling result feels almost like a contradiction to the RBM result, because $D_f$ is only actually used to determine the termination time, so doesn't feel 'essential', but I guess that is crucial in an appropriate sense.

%
\noindent

\section*{Author contributions statement}
GF conceived the core of this work, was responsible for the majority of the content, and wrote the entire first draft.
LW helped GF craft the story of the paper, reworked the document, and contributed a number of insights, results and refinements.

\section*{Acknowledgments}
GF acknowledges funding from DeepMind.
LW is supported by the UK Engineering and Physical Sciences Research Council (EPSRC) under grant number EP/V52024X/1.

\bibliographystyle{ieeetr}
\bibliography{references}

\clearpage

\appendices

\vspace{20pt}
\section{Approximate target distribution: properties and optimality}
\label{app:approximate-target-properties-and-optimality}
\textit{Computing the TV distance of the approximate target.}
Recall that in \Cref{sec:stronger_block_polyanskiy}, we defined our truncated target distribution in \Cref{eq:improved-truncated-approx-target-density} via 
\begin{align*}
    \frac{dQ_M}{dP}(x) = r_M(x) = \frac{r(x) \wedge M}{W_P(M)} \, .
\end{align*}
We now compute $\TVD{Q_M}{Q}$.
By a standard characterisation of the total variation distance where both measures $Q_M, Q$ have densities with respect to a common dominating measure we have
\begin{align}\label{eq:truncated-TVD-characterisation-original}
    \TVD{Q_M}{Q}
    = \int_{x: r(x) > r_M(x)} (r(x) - r_M(x)) dP(x) 
\end{align}
To characterise where $r(x) > r_M(x)$, first note that $W_P(M) \leq 1$ so when $r(x) \leq M$ we have $r_M(x) = r(x)/W_P(M) \geq r(x)$.
When $r(x) \geq M$ we have $r_M(x) = M / W_P(M) \eqdef \tilde{M}$.
Therefore $r(x) > r_M(x)$ precisely when $r(x) > \tilde{M}$.
Plugging this into \Cref{eq:truncated-TVD-characterisation-original}, we get
\begin{align*}
    \TVD{Q_M}{Q} 
    &= \int_{x: r(x) > \tilde{M}} (r(x) - r_M(x)) dP(x) \\
    &= \int_{x: r(x) > \tilde{M}} (r(x) - \tilde{M}) dP(x)  \\
    &= w_Q(\tilde{M}) - \tilde{M} w_P(\tilde{M})
    = S_P(\tilde{M}) \, ,
\end{align*}
as claimed, where the final equality follows from \Cref{eq:survival_prob_identity} applied to $\tilde{M}$.

\textit{Further properties.}
Firstly, note that $w_P(h) > 0$ for $h \in [0,\norm{r}_\infty)$, so in fact $W_P(h)$ is strictly increasing on $[0,\norm{r}_\infty]$.
Correspondingly, $S_P$ is strictly decreasing on this domain, and it we can define its inverse $S_P^{-1}: [0,1] \to [0,\norm{r}_\infty]$.
\par
Secondly, $\phi: h \mapsto h/W_P(h)$ is differentiable for $h>0$ with derivative
\begin{align*}
    \phi'(h) 
    &= \frac{1}{W_P(h)} - \frac{h w_P(h)}{W_P(h)} \\
    &=\frac{1}{W_P(h)^2} \left(W_P(h) - h w_P(h) \right) \\
    &=\frac{1}{W_P(h)^2} \Prob_{X\sim Q}[r(X) < h]
    \,\geq\, 0
\end{align*}
where the final equality follows by \Cref{eq:survival_prob_identity} and the definition of $w_Q$. Moreover, the inequality is strict whenever $h$ is strictly bigger than the $Q$-essential infinum of $r$.

\textit{Sample complexity in expectation.}
If we are content to have an algorithm with unbounded random runtime then we can obtain improved bounds for the rejection sampling scheme described in \Cref{sec:stronger_block_polyanskiy} using our truncation.
Indeed, given an $\epsilon > 0$, we can take an $M$ such that $\tilde{M} = S_P^{-1}(\epsilon)$ (by the invertibility results of the previous subsection).
For such an $M$, the rejection sampler will be geometrically distributed with mean $\tilde{M} = S_P^{-1}(\epsilon)$. So \Cref{eq:survival_prob_upper_bound} immediately gives
\begin{align*}
        \Exp(K) = (S_P)^{-1}(\epsilon) \leq (f')^{-1}\left(\frac{\fD{Q}{P}}{\epsilon}\right) \,.
\end{align*}
In particular, for the case of the KL-divergence we obtain a precise upper bound of $\exptwo(\KLD{Q}{P}/\epsilon)$ with no additional $\gamma$ terms.
In the next subsection we show that this bound is optimal in a certain sense.

\textit{Optimality of the truncation for random run-times.}
The following lemma shows that the family of distributions defined by \Cref{eq:improved-truncated-approx-target-density} for $M > 0$ are Pareto-optimal for the dual objectives of minimising total variation distance from $Q$ and minimising R{\'e}nyi $\infty$-divergence from $P$.

\begin{lemma}\label{lem:pareto-optimal-approximation}
    Fix some $\epsilon>0$.
    %and let $Q_M$ be as defined by \Cref{eq:improved-truncated-approx-target-density} where $M$ is chosen such that $\tilde{M} = S_P^{-1}(\epsilon)$. 
    Let $Q'$ be some other probability distribution on $\Omega$ such that $\TVD{Q'}{Q} \leq\epsilon$. Then $\exptwo\infD{Q'}{P} \geq S_P^{-1}(\epsilon)$.
\end{lemma}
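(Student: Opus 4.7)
The plan is to exhibit a particular set $A$ on which $Q$ and $Q'$ must differ by at least $S_P(L)$, where $L = \exptwo\infD{Q'}{P}$; the total-variation constraint then forces $S_P(L) \leq \epsilon$, and strict monotonicity of $S_P$ finishes the argument.

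First I would dispose of two easy cases. If $Q' \not\ll P$ then $\infD{Q'}{P} = \infty$ and there is nothing to prove. If $L \geq \norm{r}_\infty$, where $r = dQ/dP$, then $S_P^{-1}(\epsilon) \leq \norm{r}_\infty \leq L$ already holds because $S_P^{-1}$ is valued in $[0, \norm{r}_\infty]$. So I may assume $Q' \ll P$ with density $r' = dQ'/dP$ satisfying $\norm{r'}_\infty = L < \norm{r}_\infty$.

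Next I would take the witness set $A = \{x \in \Omega : r(x) \geq L\}$. By the definitions recalled in \Cref{sec:stronger_block_polyanskiy}, $P(A) = w_P(L)$ and $Q(A) = w_Q(L)$. Since $r' \leq L$ holds $P$-almost everywhere,
\begin{align*}
Q'(A) = \int_A r'(x)\, dP(x) \leq L \cdot w_P(L).
\end{align*}
Subtracting and applying the identity $S_P(h) = w_Q(h) - h \cdot w_P(h)$ from \Cref{eq:survival_prob_identity} yields $Q(A) - Q'(A) \geq S_P(L)$. On the other hand, the total-variation hypothesis gives $Q(A) - Q'(A) \leq \TVD{Q'}{Q} \leq \epsilon$, so $S_P(L) \leq \epsilon$. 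Invoking the strict monotonicity of $S_P$ on $[0, \norm{r}_\infty]$ (established in the preceding subsection of this appendix), I conclude $L \geq S_P^{-1}(\epsilon)$, as desired.

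The argument involves no hard analytic step; the only real subtlety is notational, namely keeping the two densities $r$ and $r'$ distinct and being careful to define the witness set $A$ using $r$ (the target) rather than $r'$ (the approximation). The conceptual content is that $Q$ concentrates mass $w_Q(L)$ on $\{r \geq L\}$ while any $Q'$ with density bounded by $L$ can deposit at most $L \cdot P(A)$ on that set, and identity \Cref{eq:survival_prob_identity} recognises the resulting gap as exactly $S_P(L)$.
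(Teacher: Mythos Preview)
Your proof is correct and follows essentially the same approach as the paper: both evaluate the total-variation gap on the superlevel set $\{r \geq L\}$, bound $Q'$ there via $r' \leq L$, and recognise the resulting deficit $w_Q(L) - L\,w_P(L)$ as $S_P(L)$ using \Cref{eq:survival_prob_identity}, then invoke strict monotonicity of $S_P$. The only cosmetic difference is that the paper argues by contradiction whereas you proceed directly and handle the edge cases $Q' \not\ll P$ and $L \geq \norm{r}_\infty$ explicitly.
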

\begin{proof}
Write $\tilde{M}' \defeq \exptwo\infD{Q'}{P}$ and $\tilde{M} = S_P^{-1}(\epsilon)$.
    Suppose for contradiction that $\tilde{M}' < \tilde{M}$.
    Then 
    \begin{align*}
        \TVD{Q'}{Q} 
        &= \int_{\{x: r(x) > r'(x)\}} r(x) - r'(x) dP(x)  \\
        &\geq \int_{\{x: r(x) \geq \tilde{M}'\}} r(x) - r'(x) dP(x)  \\
        &\geq \int_{\{x: r(x) \geq \tilde{M}'\}} r(x) - \tilde{M} dP(x) \\
        &= S_P(\tilde{M}')
        > S_P(\tilde{M}) = \epsilon\, ,
    \end{align*}
    where the final inequality follows because $\tilde{M}=S_P^{-1}(\epsilon) \leq \norm{r}_\infty$ and $S_P$ is strictly decreasing on $[0,\norm{r}_\infty]$.
    This yields the required contradiction.
\end{proof}

\begin{proposition}
Fix some $\epsilon > 0$.
Let $N,K$ correspond to an approximate A*-like non-causal sampler with sampling distribution $\hat{Q} = \Law{X_N}$ such that $\TVD{\hat{Q}}{Q} \leq \epsilon$.
Then 
\begin{align*}
    \Exp(K) \geq (S_P)^{-1}(\epsilon) \, .
\end{align*}    
\end{proposition}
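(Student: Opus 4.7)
The plan is to deduce the proposition directly by combining two ingredients already in hand: the Goc--Flamich lower bound for exact A*-like samplers (cited in the main text), and \Cref{lem:pareto-optimal-approximation}. The key conceptual move is to reinterpret the approximate sampler as an \emph{exact} sampler for its own output distribution $\hat{Q}$.

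First, I would verify that $\hat{Q} \ll P$, so that $\infD{\hat{Q}}{P}$ is well-defined. Since $X_N$ is obtained by selecting one of the $P$-distributed variates, we have
\begin{align*}
\hat{Q}(B) = \sum_{n \in \Nats} \Prob[N = n,\, X_n \in B] \leq \sum_{n} \mathbf{1}_{\{\Prob[N=n]>0\}} P(B),
\end{align*}
so $P(B) = 0$ implies $\hat{Q}(B) = 0$. Next, the central observation: regardless of what the sampler's intended target was, the pair $(N, K)$ satisfies $1 \leq N \leq K$ with $K$ a stopping time for $(X_i)_{i \in \Nats}$, and by construction $\Law{X_N} = \hat{Q}$. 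Hence $(N, K)$ constitutes an \emph{exact} A*-like sampler for $\hat{Q}$ in the sense of \Cref{def:a_star_like_exact_sampler}, and the Goc--Flamich bound yields
\begin{align*}
\Exp[K] \geq \exptwo\bigl(\infD{\hat{Q}}{P}\bigr).
\end{align*}

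Finally, the hypothesis $\TVD{\hat{Q}}{Q} \leq \epsilon$ lets me apply \Cref{lem:pareto-optimal-approximation} with $Q' = \hat{Q}$, which gives $\exptwo \infD{\hat{Q}}{P} \geq S_P^{-1}(\epsilon)$. Chaining these two inequalities delivers the desired $\Exp[K] \geq S_P^{-1}(\epsilon)$.

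The only mildly subtle step is the change of perspective in which we invoke an \emph{exact} sampling lower bound by taking the sampler's own output law as the target; everything else is a straightforward verification. No new combinatorial or analytic machinery is required beyond what the appendix has already set up.
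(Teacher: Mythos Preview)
Your proposal is correct and follows essentially the same route as the paper: apply the Goc--Flamich lower bound (treating the approximate sampler as an exact sampler for its own output law $\hat{Q}$) to get $\Exp[K] \geq \exptwo\infD{\hat{Q}}{P}$, then chain with \Cref{lem:pareto-optimal-approximation}. The paper's proof is the same two-line combination, just without your extra verification of $\hat{Q} \ll P$.
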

\begin{proof}
    By \cite[Theorem A.2.]{goc2024channel} we have that $\Exp(K) \geq \exp(\infD{\hat{Q}}{P})$.
    Combining this with \Cref{lem:pareto-optimal-approximation} yields the result.
\end{proof}
\section{Discussion of importance sampling bounds of Chatterjee and Diaconis}\label{app:diaconis-looseness}
\textit{Looseness for certain pairs of distributions.}
The right hand side of the bound from \cite{chatterjee2018sample} we gave above crucially involves the quantity
\begin{align*}
    \Prob_{X \sim Q}[\logtwo r(X) > L + t/2] \, ,
\end{align*}
where, for the purposes of this appendix, we synthesise notation from the main-text with that of \cite{chatterjee2018sample} and write $L=\KLD{Q}{P}$.
\par
When $\logtwo r(Y)$ is supported on the set $\{0, L + t/2\}$, a calculation inspired by the equality case of Markov's inequality gives
\begin{align*}
(L + & t/2) \, \Exp_{Y \sim Q}\left[\Ind[\logtwo r(Y) \geq L + t/2]\right] \\
&= \Exp_{Y \sim Q}\left[\logtwo r(Y)\Ind[\logtwo r(Y) \geq L + t/2]\right] \\
&= \Exp_{Y \sim Q}\left[\logtwo r(Y)\Ind[\logtwo r(Y) \geq 0]\right] \\
&= \KLD{Q}{P} = L
\end{align*}
In particular, we see that 
\begin{align*}
\Prob_{X \sim Q}[\logtwo r(X) \geq L + t/2] = \frac{L}{L+t/2}
\end{align*}
So this term decays far slower than exponential in $t$.
Indeed, if we wish the bound in \Cref{eq:chatterjee_prob_bound} to be less than a given tolerance $2\epsilon$,
we must have
\begin{align*}
\epsilon^2 
\geq 2 \sqrt{\Prob_{X \sim Q}[\logtwo r(X) \geq L + t/2]}
\geq 2\sqrt{\frac{L}{L+t/2}}
\end{align*}
which upon rearranging gives
\begin{align*}
    t \geq 2L ((2/\epsilon^2)^2 - 1) \:.
\end{align*}
This gets large very quickly as $\epsilon$ gets small, and shows that the surplus of samples in the exponent in fact has to scale with $L=\KLD{P}{Q}$ in the accuracy desired.
\par
\textit{Tighter bound by construction analogous to \Cref{eq:improved-truncated-approx-target-density}.}
We now strictly strengthen the upper bounds from \cite{chatterjee2018sample} by using a tighter construction analogous to that of \Cref{eq:improved-truncated-approx-target-density}.
To state the bound, we need to introduce further notation from their original paper.
Indeed, for $\varphi \in L^2(Q)$ write
\begin{align*}
    I(\varphi) \defeq \int_\mathcal{X} \varphi(x) dQ(x), \quad I_n(\varphi) = \frac{1}{n}\sum_{i=1}^n \varphi(X_i) r(X_i) .
\end{align*}
Then their main result can be improved by the following theorem.
\begin{theorem}\label{thm:chatterjee-diaconis-strengthening}
Write $L = \KLD{Q}{P}$.
If $n=2^{L+t}$ for some $t\geq 0$, then
\begin{align*}
    \Exp(\abs{I_n(\varphi) - I(\varphi)}) \leq \norm{\varphi}_{L^2(Q)} \, \left(2^{-t/4} + 2 \sqrt{S_P(2^{L+t/2})}\right)
\end{align*} 
\end{theorem}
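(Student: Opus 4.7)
The plan is to strengthen Chatterjee and Diaconis's bound by replacing their crude truncation-based argument, which produces tail terms involving $w_Q$, with one that produces the smaller $S_P$. The key object will be the truncated density ratio $r_M(x) = r(x) \wedge M$ with $M = 2^{L+t/2}$, echoing the construction in \Cref{eq:improved-truncated-approx-target-density} but used here at the level of the importance-weighted estimator rather than a sampler. Write $I^M = \Exp_{X\sim P}[\varphi(X) r_M(X)]$ and $\hat I_n^M = \tfrac{1}{n}\sum_{i=1}^n \varphi(X_i) r_M(X_i)$, and split
\begin{align*}
\Exp\abs{I_n(\varphi) - I(\varphi)} \leq \Exp\abs{I_n(\varphi) - \hat I_n^M} + \Exp\abs{\hat I_n^M - I^M} + \abs{I^M - I(\varphi)}.
\end{align*}

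For the middle (variance) term, since $\hat I_n^M$ is unbiased for $I^M$, Jensen and independence give
\begin{align*}
\Exp\abs{\hat I_n^M - I^M} \leq \sqrt{\tfrac{1}{n} \Var_{X\sim P}(\varphi(X)r_M(X))} \leq \sqrt{\tfrac{M}{n} \Exp_{X\sim P}[\varphi(X)^2 r(X)]},
\end{align*}
which equals $\sqrt{M/n}\,\norm{\varphi}_{L^2(Q)} = 2^{-t/4}\,\norm{\varphi}_{L^2(Q)}$ for our choices of $M$ and $n$.

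For the two truncation terms, both are bounded by $\Exp_{X\sim P}[\abs{\varphi(X)} (r(X)-M)_+]$ (the first by triangle inequality inside the expectation, the third directly). Applying Cauchy--Schwarz and writing $|\varphi|(r-M)_+ = |\varphi|\sqrt{r}\cdot\sqrt{(r-M)_+^2/r}$ yields a product of $\sqrt{\Exp_{X\sim P}[\varphi^2 r]} = \norm{\varphi}_{L^2(Q)}$ and $\sqrt{\Exp_{X\sim P}[(r-M)_+^2/r]}$. The crucial observation is that on $\{r > M\}$ we have $(r-M)/r \leq 1$, so $(r-M)_+^2/r \leq (r-M)_+$, and then Fubini in the same style as the derivation of \Cref{eq:survival_prob_upper_bound} gives
\begin{align*}
\Exp_{X\sim P}[(r(X)-M)_+] = \int_M^\infty w_P(h)\,dh = S_P(M).
\end{align*}
Summing the three contributions with $M = 2^{L+t/2}$ produces the claimed bound.

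The only real subtlety is arranging Cauchy--Schwarz so that the tail factor is $S_P(M)$ rather than the weaker $\Prob_{X\sim Q}[r(X) \geq M] = w_Q(M)$; this is precisely where the inequality $(r-M)_+^2/r \leq (r-M)_+$ buys us the improvement over \cite{chatterjee2018sample}, since $S_P(M) = w_Q(M) - M\,w_P(M)$ by \Cref{eq:survival_prob_identity}. Everything else is routine bookkeeping, and the $2^{-t/4}$ rate comes directly from the balance $M/n = 2^{-t/2}$ between variance and truncation bias.
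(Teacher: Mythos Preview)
Your proof is correct and is essentially the same as the paper's: the paper truncates the test function via $h(x)=\varphi(x)\cdot(a/r(x)\wedge 1)$ with $a=2^{L+t/2}$, but since $h(x)r(x)=\varphi(x)(r(x)\wedge a)=\varphi(x)r_M(x)$, your intermediate quantities $\hat I_n^M$ and $I^M$ coincide exactly with the paper's $I_n(h)$ and $I(h)$, so the three-term decomposition is identical. The only cosmetic difference is in the Cauchy--Schwarz step---the paper uses the measure $\Ind[r>a]\tfrac{r-a}{r}\,dQ$ and splits $\varphi\cdot 1$, whereas you work in $L^2(P)$ and split $|\varphi|\sqrt{r}\cdot (r-M)_+/\sqrt{r}$---but both routes land on $\norm{\varphi}_{L^2(Q)}\sqrt{S_P(M)}$ after the obvious pointwise inequalities.
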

\begin{remark}
This is identical to the upper bound Theorem 1.1 of \cite{chatterjee2018sample} apart from the $S_P(\exptwo(L+t/2))$ term, which replaces a term of $\Prob_{X\sim Q}[\logtwo r(X) \geq L + t/2] = w_Q(\exptwo(L+t/2))$ in the original work. \Cref{eq:survival_prob_identity} shows that $S_P(h) \leq w_Q(h)$ for all $h\geq 0$ and so this is a strict strengthening of the original result.
\par
This immediately gives a corresponding strengthening of the upper bound of Theorem 1.2 of \cite{chatterjee2018sample} that we stated in \Cref{sec:why-total-variation}. We can replace the $w_Q(\exptwo(L+t/2))$ term by a $S_P(\exptwo(L+t/2))$ term by applying their union bound argument with \Cref{thm:chatterjee-diaconis-strengthening} in place of their upper bound from Theorem 1.1.
Moreover, by using \Cref{eq:survival_prob_upper_bound} we can get a further upper bound on this term from a bound on any $f$-divergence.
\par
In particular, the most natural divergence to use is the KL, for which we have $S_P(a) \leq \KLD{Q}{P} / \logtwo(a)$, applying which yields
\begin{align*}
\Exp[&\abs{I_n(\varphi) - I(\varphi)}] \\
&\quad\quad\quad\leq \norm{\varphi}_{L^2(Q)} \, \left(2^{-t/4} + 2 \sqrt{\frac{\KLD{Q}{P}}{\KLD{Q}{P} + t/2}}\right)\\
&\quad\quad\quad=\norm{\varphi}_{L^2(Q)} \, \left(2^{-t/4} + 2 \sqrt{\frac{L}{L + t/2}}\right)
\end{align*} 
\end{remark}

\begin{proof}[Proof of \Cref{thm:chatterjee-diaconis-strengthening}]
Write $a \defeq \exptwo(L+t/2)$ as in \cite{chatterjee2018sample}.
Now construct a `truncated' version of $\varphi$ via
\begin{align*}
h(x) = \varphi(x) \cdot \left(\frac{a}{r(x)} \wedge 1 \right) \, .
\end{align*}
We then follow the proof of \cite{chatterjee2018sample} mutatis mutandis.

First note that this gives
\begin{align*}
\varphi(x) - h(x) 
&= \varphi(x) \left( 1 - \Big(\frac{a}{r(x)} \wedge 1 \Big) \right) \\
&= \varphi(x) \, \Ind[r(x) > a] \,\frac{r(x) - a}{r(x)} 
\,\geq\, 0 \, .
\end{align*}
Then we can bound the approximation quality
\begin{align*}
\abs{I(\varphi) - I(h)} 
&= \Exp_{X \sim Q} \left[\varphi(X) - h(X)\right] \\
&= \int \varphi(x) \Ind[r(x) > a] \frac{r(x) - a}{r(x)} dQ(x) \\
&\leq \bigg[ \int \Ind[r(x) > a] \frac{r(x) - a}{r(x)} \varphi(x)^2 dQ(x) \\
&\qquad \cdot \int \Ind[r(x) > a] \frac{r(x) - a}{r(x)} dQ(x)\bigg]^{1/2} \\
&\leq \norm{\varphi}_{L^2(Q)} S_P(a)^{1/2}  \, .
\end{align*}

Now bound
\begin{align*}
\Exp \abs{I_n(\varphi) - I_n(h)}
&\leq \frac{1}{n} \sum_{i=1}^n \Exp(\abs{r(X_i) h(X_i) - r(X_i) \varphi(X_i)}) \\
&= \Exp_{X \sim P}[ r(X) \abs{\varphi(x) - h(x)}] \\
&= \Exp_{X \sim Q}[\varphi(X) - h(X) ]
\end{align*}
to see that this is also bounded by $\norm{\varphi}_{L^2(Q)} S_P(a)^{1/2}$.

Next we bound the variance
\begin{align*}
\Exp\left[\abs{I_n(h) - I(h)}\right]^2
&\leq \Exp\left[(I_n(h) - I(h))^2\right] \\
&= \frac{1}{n} \Var_{X \sim P}(r(X) h(X)) \\
&\leq \frac{1}{n} \Exp_{X \sim P}(r(X) h(X))^2 \\
&\leq \frac{1}{n} \Exp_{X \sim P}[(r(X)\wedge a) \varphi(X))^2] \\
&\leq \frac{a}{n} \Exp_{X \sim P}[r(X) \varphi(X)^2 ] \\
&= \frac{a}{n} \norm{\varphi}_{L^2(Q)}^2 \, .
\end{align*}
\par
Note $a/n = \exptwo(-t/2)$.
We can therefore conclude by the triangle inequality:
\begin{align*}
\Exp&(\abs{I_n(\varphi) - I(\varphi)}) \\
&\leq \Exp\left(\abs{I_n(\varphi) - I_n(h)} + \abs{I_n(h) - I(h)} + \abs{I(h) - I(\varphi)} \right) \\
&\leq \norm{\varphi}_{L^2(Q)} \, \left(2^{-t/4} + 2 S_P(a)^{1/2}\right) \,.
\end{align*}
\end{proof} 
\section{Lower bound on the entropy of the Block-Polyanskiy rejection sampler}
\label{app:lower_bound_on_rej_sampler_entropy}
Before we analyse the index, we introduce a small modification to the scheme to make it more efficient: if the rejection sampler doesn't terminate in the first $n$ steps, instead of picking an index at random, we pick the first index.
Thus, let $J$ be the index selected by this modified rule, i.e.\ 
\begin{align*}
J = \Ind[K \leq n]\cdot K + \Ind[K > n],
\end{align*}
where $K$ is modelling the runtime of the rejection sampler.
Note, that the modification doesn't change the output distribution $\widetilde{Q}_M$, but minimizes the entropy of the coded index.
\par
By the concavity of Shannon entropy, we have
\begin{align*}
\Ent[J] \geq \Prob[K \leq n] \cdot \Ent[K \mid K \leq n],
\end{align*}
\par
Let $B$ be the event that $K \leq n$.
Then, by the chain rule of Shannon entropy, we have
\begin{align*}
\Ent[K, B] &= \Ent[K \mid B] + \Ent[B] \\
&=\underbrace{\Ent[B \mid K]}_{=0} + \Ent[K].
\end{align*}
From the above, and by the definition of conditional entropy, we thus find
\begin{align*}
\Ent[K \mid &K \leq n] = \frac{\Ent[K] - \Ent[B] - \Prob[K > n] \Ent[K \mid K > n]}{\Prob[K \leq n]}
\end{align*}
Now, setting $\phi(x) = -x \logtwo x$, we get
\begin{align*}
\Ent[K \mid K > n] &= \sum_{k = n + 1}^\infty \phi(\Prob[K = k \mid K > n]) \\
&\stackrel{\text{(a)}}{=} \sum_{k = n + 1}^\infty \phi(\Prob[K = k - n]) \\
&=\sum_{k = 1}^\infty \phi(\Prob[K = k]) \\
&= \Ent[K]
\end{align*}
where equation (a) follows from the memoryless property of geometric random variables.
Plugging this identity back in the above equation, we get
\begin{align*}
\Ent[K \mid K \leq n] = \Ent[K] - \frac{\Ent[B]}{\Prob[K \leq n]},
\end{align*}
and thus
\begin{align*}
\Ent[J] &\geq \Prob[K \leq n] \cdot \Ent[K \mid K \leq n] \\
&= \Prob[K \leq n] \cdot \Ent[K] - \Ent[B]\\
&\stackrel{\text{(a)}}{\geq}
\left(1 - \frac{\epsilon}{2}\right)\logtwo M - 1 
\end{align*}
where inequality (a) follows from $\Ent[B] \leq 1$ since $B$ is a binary random variable, from $\Prob[K \leq n] \geq 1 - \epsilon/2$ by construction and that $\Ent[K] \geq -\logtwo p = \logtwo M$.
Finally, we note that when using the $\KLD{Q}{P}$ as our divergence, 
\begin{align*}
M = \frac{2}{1 - \epsilon} \ln\left(\frac{2}{\epsilon}\right)\exptwo\left(\frac{4 \KLD{Q}{P}}{\epsilon}\right) \, .
\end{align*}
Thus, for $\epsilon \leq 2/e$, we get
\begin{align*}
\Ent[J] \geq \left(1 - \frac{\epsilon}{2}\right) \frac{4 \KLD{Q}{P}}{\epsilon} - 1,
\end{align*}
and for small $\epsilon$, this will be far larger than $\KLD{Q}{P}$.

\end{document}